\documentclass[a4paper]{article}

\usepackage[english]{babel}
\usepackage[utf8x]{inputenc}
\usepackage[T1]{fontenc}

\usepackage[a4paper,top=3cm,bottom=2cm,left=3cm,right=3cm,marginparwidth=2cm]{geometry}

\usepackage{graphicx}
\usepackage{url}
\usepackage[colorinlistoftodos]{todonotes}
\usepackage[colorlinks=true, allcolors=blue]{hyperref}
\usepackage{adjustbox}
\usepackage{nicefrac}
\usepackage{amsmath,amsfonts,amssymb,amsthm}
\usepackage{subcaption}

\newtheorem{theorem}{\bf Theorem}[section]
\newtheorem{lemma}{\bf Lemma}[section]

\newtheorem{corollary}{\bf Corollary}[section]
\newtheorem{example}{\bf Example}[section]


\graphicspath{ {./images/} }


\title{\textbf{Measures of contextuality in cyclic systems and the negative probabilities measure \(\text{CNT}_{3}\)}}
\author{Giulio Camillo$^1$\footnote{giulio.silva@usp.br}\ , Víctor H. Cervantes$^2$}
\date{%
    {\textit{$^1$Instituto de Física, Universidade de São Paulo, Brasil}}\\%
    {\textit{$^2$Deparment of Psychology, University of Illinois Urbana-Champaign, USA}}\\[2ex]%
    \today
}

\begin{document}
%
%
%

\maketitle

\begin{abstract}
Several principled measures of contextuality have been proposed for general systems of random variables (i.e. inconsistentlly connected systems). 
The first of such measures was based on quasi-couplings using negative probabilities (here denoted by \(\text{CNT}_{3}\), Dzhafarov \& Kujala, 2016).
Dzhafarov and Kujala (2019) introduced a measure of contextuality, \(\text{CNT}_{2}\), 
that naturally generalizes to a measure of non-contextuality.
Dzhafarov and Kujala (2019) additionally conjectured that in the class of cyclic systems these two measures are proportional.
Here we prove that that conjecture is correct.
Recently, Cervantes (2023) showed the proportionality of \(\text{CNT}_{2}\)
and the Contextual Fraction measure (CNTF) introduced by Abramsky, Barbosa, and Mansfeld (2017). 
The present proof completes the description of the interrelations of all contextuality measures as they pertain to cyclic systems.
\end{abstract}



Contextuality is a property of systems of random
variables. 
A system is contextual when the observed joint distributions
within different contexts are incompatible with 
the equality in probability of variables across contexts
(we shall give a formal definition of contextuality below).
In the contextuality literature, several measures or indexes
of the degree of contextuality of a system have been introduced.
Each of these measures reflects a unique aspect of contextuality
and together provide a pattern of the system's contextuality.
The class of cyclic systems is prominent in applications of contextuality
and is used to represent many important scenarios of
quantum contextuality,
such as the EPR/Bohm scenario \cite{Fine.1982.Joint,Clauser.1969.Proposed},
or the Klyachko-Can-Binicioğlu-Shumovsky scenario \cite{Klyachko.2008.Simple}.

For cyclic systems, it has been conjectured that all measures in the literature are
proportional to each other.
In Ref.~\cite{Kujala.2019.Measures}, the equality of three of these measures (\(\text{CNT}_{1}, \text{CNT}_{2}\), and \(\text{CNT}_{0}\))
was proved,
and in Ref.~\cite{Cervantes.2023.note}, the proportionality
of \(\text{CNT}_{2}\) and the Contextual Fraction (\(\text{CNTF}\))
was demonstrated.
Together, these two results show the proportionality of all but one
of the measures found in the literature.
The measure \(\text{CNT}_{3}\) based on negative probabilities
was conjectured to be proportional to \(\text{CNT}_{2}\) in Ref.~\cite{Dzhafarov.2019.Contextuality-by-Default}.
In this paper, we prove the truth of this conjecture by means
of showing that \((n - 1)\text{CNT}_{3} = \text{CNTF}\).
Thus, this paper culminates the theoretical description of
the interrelations of all contextuality measures as they pertain cyclic systems.

\section{Contextuality-by-Default}

In this section, we present the Contextuality-by-Default
(CbD) approach to contextuality analysis
\cite{Dzhafarov.2016.Contextuality-by-Default,Dzhafarov.2017.Contextuality-by-Default,Dzhafarov.2016.Context-content,Kujala.2019.Measures,Dzhafarov.2019.Contextuality-by-Default,Dzhafarov.2020.Contextuality}.
A \emph{system} of random variables is a set of double-indexed random
variables \(R_{q}^{c}\), where \(c \in C\) is the \emph{context} of the random variable, 
the conditions under which it is recorded,
and \(q \in Q\) denotes its \emph{content},
the property of which the random variable is a measurement.
The following is a presentation of a system:
\begin{equation}
\mathcal{R} = \left\{ R_{q}^{c} : c \in C, q \in Q, q \prec c \right\}, \label{eq:generalSystem}
\end{equation}
where \(q \prec c\) denotes that content \(q\) is measured in context \(c\).

For each \(c \in C\), the subset
\begin{equation}
\mathrm{R}^{c} = \left\{ R_{q}^{c} : q \in Q, q\prec c \right\} \label{eq:bunch}
\end{equation}
is referred to as the \emph{bunch} for context \(c\). 
The variables within a bunch are jointly distributed.
That is, bunches are random vectors with a given probability distribution. 
For each \(q\in Q\), the subset 
\begin{equation}
\mathcal{R}_{q} = \left\{ R_{q}^{c} : c\in C, q\prec c \right\} \label{eq:connection}
\end{equation}
is referred to as the \emph{connection} corresponding to content \(q\).
However, no two random variables within a connection \(\mathcal{R}_{q}\) are
jointly distributed; 
thus, they are said to be \emph{stochastically unrelated}.\footnote{More generally,
any two \(R_{q}^{c}, R_{q'}^{c'} \in \mathcal{R}\) with
\(c \neq c'\) are stochastically unrelated.
We emphasize that variables within a connection (and within a system) are
stochastically unrelated by using calligraphic script for their names,
and that variables of a bunch do possess a joint distribution
by using roman script.}

Cyclic systems are a prominent class of systems of random variables.
They are the object of Bell's theorem \cite{Bell.1964.Einstein,Bell.1966.problem},
the Leggett--Garg theorem \cite{Leggett.1985.Quantum},
Suppes and Zanotti's theorem \cite{Suppes.1981.When}, 
the Klyachko-Can-Binicioğlu-Shumovsky theorem \cite{Klyachko.2008.Simple},
as well as many other theoretical results (see e.g., \cite{Araujo.2013.All,Kujala.2016.Proof}). 
Cyclic systems are used to model most applications that empirically explore contextuality
(e.g., \cite{Adenier.2017.Test,Asano.2014.Violation,Cervantes.2018.Snow,Dzhafarov.2015.there,Hensen.2015.Loophole-free,Wang.2013.Quantum}).
Furthermore, as shown in Refs.~\cite{Dzhafarov.2020.Contextuality,Abramsky2013},
a system without cyclic subsystems is necessarily noncontextual.
A system \(\mathcal{R}\) is said to be \emph{cyclic} if
\begin{enumerate}
\item each of its contexts contains two jointly distributed \emph{binary} random variables,
\item each content is measured in two contexts, and
\item there is no proper subsystem of \(\mathcal{R}\) that satisfies (i) and (ii).
\end{enumerate}
The number \(n\) of contexts (and contents) on 
a cyclic system is known as its \emph{rank}. 
For any cyclic system, a rearrangement and numbering of its contexts and contents
can always be found so that the system
can be given the presentation
\begin{equation}
\mathcal{R}_{n} = \left\{ \left\{ R_{i}^{i}, R_{i \oplus 1}^{i} \right\} : i = 1, \ldots, n \right\} ,\label{eq:cyclicSystem}
\end{equation}
where \(R_{j}^{i}\) stands for \(R_{q_{j}}^{c_{i}}\), and \(\oplus 1\)
denotes cyclic shift \(1 \mapsto 2, \ldots, n-1 \mapsto n, n \mapsto 1\).\footnote{Similarly,
\(\ominus 1\) will denote the inverse shift of \(\oplus 1\).}
In this way, the variables \(\left\{ R_{i}^{i}, R_{i \oplus 1}^{i} \right\}\)
constitute the bunch corresponding to context \(c_{i}\). 
The following
matrices depict the format of two cyclic systems: a cyclic system of rank 3,
and a cyclic system of rank 6.
\begin{equation}\label{eq:cycles}
\begin{array}{ccc}
\begin{array}{|c|c|c||c|}
\hline R_{1}^{1} & R_{2}^{1} &           & c_{1} \\
\hline           & R_{2}^{2} & R_{3}^{2} & c_{2} \\
\hline R_{1}^{3} &           & R_{3}^{3} & c_{3} \\
\hline
\hline q_{1}     & q_{2}     & q_{3}     & \mathcal{R}_{3} \\
\hline 
\end{array} 
&  & 
\begin{array}{|c|c|c|c|c|c||c|}
\hline R_{1}^{1} & R_{2}^{1} &           &           &           &           & c_{1} \\
\hline           & R_{2}^{2} & R_{3}^{2} &           &           &           & c_{2} \\
\hline           &           & R_{3}^{3} & R_{4}^{3} &           &           & c_{3} \\
\hline           &           &           & R_{4}^{4} & R_{5}^{4} &           & c_{4} \\
\hline           &           &           &           & R_{5}^{5} & R_{6}^{5} & c_{5} \\
\hline R_{1}^{6} &           &           &           &           & R_{6}^{6} & c_{6} \\
\hline
\hline q_{1}     & q_{2}     & q_{3}     & q_{4}     & q_{5}     & q_{6}     & \mathcal{R}_{6} \\
\hline 
\end{array}
\end{array}
\end{equation}

A system is said to be \emph{consistently connected} if, for any two
\(R_{q}^{c}, R_{q}^{c'}\), the respective distributions of \(R_{q}^{c}\)
and \(R_{q}^{c'}\) coincide. 
This property of a system encodes the \emph{no-disturbance}
or the \emph{no-signaling} requirements in quantum physics,
and corresponds to the \emph{marginal selectivity} 
condition in the selective influences literature in psychology.
If this property is not satisfied, the system is said to be \emph{inconsistently connected.} 
In general, a system of random variables \(\mathcal{R}\)
is inconsistently connected.

Kujala and Dzhafarov \cite{Kujala.2019.Measures}
introduced a vectorial description of a system.
This representation is obtained by taking the probabilities
of events for each random variable, for each bunch, and for each connection
in the system. 
Let \(\mathbf{l}_{(.)}\), \(\mathbf{b}_{(.)}\), and \(\mathbf{c}_{(.)}\)
denote these vectors, respectively. For describing a cyclic system,
the first two vectors are:
\begin{equation}
\mathbf{l}_{(.)} =  \left(
\begin{array}{c}
\Pr(R_{i}^{i} = r_{i}^{i}) \\
\Pr(R_{i \oplus 1}^{i} = r_{i \oplus 1}^{i})
\end{array}
\right)_{i = 1, \ldots, n}, \label{eq:fullMargins}
\end{equation}
\begin{equation}
\mathbf{b}_{(.)} =  \left(
\begin{array}{c}
\Pr(R_{i}^{i} = r_{i}^{i}, R_{i \oplus 1}^{i} = r_{i \oplus 1}^{i})
\end{array}
\right)_{i = 1, \ldots, n}, \label{eq:fullBunches}
\end{equation}
where \(r_{i}^{i}, r_{i \oplus 1}^{i}\) may each take one of two values
(we will assume  \(r_{i}^{i}, r_{i \oplus 1}^{i} = 0, 1\)).

The vector \(\mathbf{c}_{(.)}\) contains imposed probabilities.
These probabilities define a \emph{coupling} of the variables within each connection.
A coupling of a set of random variables \(\left\{ X_{i} \right\}_{i \in I}\),
where \(I\) indexes the variables in the set, is a new set of jointly
distributed random variables \(\left\{ Y_{i} \right\}_{i\in I}\) such that for each \(i\in I\), 
the distributions of \(X_{i}\) and \(Y_{i}\) coincide. 
In a \emph{multimaximal coupling} of a set of random variables, 
for any two \(Y_{i}, Y_{i'}\), the probability \(\Pr(Y_{i} = Y_{i'})\) is 
the maximal possible given their individual distributions. 
If we denote the variables of the multimaximal coupling of connection \(\mathcal{R}_{q_{j}}\) by
\begin{equation}
\mathrm{T}_{q_{j}} = \left\{ T_{j}^{i} : c_{i} \in C, q_{j} \prec c_{i} \right\}, \label{eq:connectionCoupling}
\end{equation}
one obtains the vector
\begin{align}
\mathbf{c}_{(.)} =\left(
\Pr(T_{i}^{i} = r_{i}^{i}, T_{i}^{i \ominus 1} = r_{i}^{i \ominus 1})
\right)_{i = 1, \ldots, n}, \label{eq:fullConnection}
\end{align}
with \(r_{i}^{i}, r_{i}^{i \ominus 1} = 0, 1\), and where 
\begin{equation}
\begin{array}{ll}
\Pr(T_{i}^{i} = 0, T_{i}^{i \ominus 1} = 0) & = \min(\Pr(R_{i}^{i} = 0), \Pr(R_{i}^{i \ominus 1} = 0)), \\
\Pr(T_{i}^{i} = 1, T_{i}^{i \ominus 1} = 1) & = \min(\Pr(R_{i}^{i} = 1), \Pr(R_{i}^{i \ominus 1} = 1)), \\
\Pr(T_{i}^{i} = 0, T_{i}^{i \ominus 1} = 1) & = \Pr(R_{i}^{i \ominus 1} =1 ) - \Pr(T_{i}^{i} = 1, T_{i}^{i \ominus 1} = 1), \\
\text{and} \\
\Pr(T_{i}^{i}= 1 , T_{i}^{i \ominus 1} = 0) & =\Pr(R_{i}^{i} = 1) - \Pr(T_{i}^{i} = 1, T_{i}^{i \ominus 1} = 1).
\end{array}
\end{equation}
Clearly, in the multimaximal coupling \(\mathrm{T}_{q_{j}}\),
\(\Pr(T_{i}^{i} = 0, T_{i}^{i \ominus 1} = 1) = 0\) or
\(\Pr(T_{i}^{i} = 1, T_{i}^{i \ominus 1} = 0) = 0\).
Note that whenever a system \(\mathcal{R}\)
is consistently connected, then for any two \(R_{q}^{c}, R_{q}^{c'}\),
the corresponding variables of a multimaximal coupling of \(\mathcal{R}_{q}\)
are almost always equal (that is, \(\Pr(T_{q}^{c} = T_{q}^{c'}) = 1\)).

\subsection{Consistification}

\emph{Consistification} is a procedure that can be applied to any system of binary random
variables that will create a new system \(\mathcal{R}^{\ddagger}\) that
is consistently connected and whose contextual status is the same
as that of the original system \(\mathcal{R}\). 
Here we introduce the procedure closely following the presentation given in Ref.~\cite{Cervantes.2023.note}.
The consistification of system \(\mathcal{R}\) is obtained by constructing a new system
\(\mathcal{R}^{\ddagger}\) in the following manner. 
First, define the set of contents \(Q^{\ddagger}\) of the new system as
\begin{equation}
Q^{\ddagger} = \left\{ q_{ij} : c_{i} \in C, q_{j} \in Q, q_{j} \prec c_{i} \right\} .
\end{equation}
That is, for each content \(q_{j}\) and each of the contexts \(c_{i}\)
in which it is recorded, we define a content \(q_{ij}\)=``\(q_{j}\)
recorded in context \(c_{i}\)''. 
Next, define the new set of contexts \(C^{\ddagger}\) as
\begin{equation}
C^{\ddagger} = C \sqcup Q,
\end{equation}
the disjoint union of the contexts and the contents of the system \(\mathcal{R}\). 
Then, define the new relation 
\begin{equation}
\prec^{\ddagger} = \left\{ (q_{ij}, c_{i}) : q_{j} \in Q, c_{i} \in C, q_{j} \prec c_{i} \right\} 
\sqcup 
\left\{ (q_{ij}, q_{j}) : q_{j} \in Q, c_{i} \in C, q_{j} \prec c_{i}\right\}.
\end{equation}
That is, the new content \(q_{ij}\) is recorded in precisely two of
the new contexts, \(c_{i}, q_{j} \in C^{\ddagger}\).
Therefore, the bunch
\begin{equation}
\mathrm{R}^{c_{i}}=\left\{ R_{q_{ij}}^{c_{i}}:q_{ij}\in Q^{\ddagger},q_{ij}\prec^{\ddagger}c_{i}\right\} 
\end{equation}
coincides with the bunch
\begin{equation}
\mathrm{R}^{c_{i}} = \left\{ R_{q}^{c_{i}} : q \in Q, q \prec c_{i} \right\} 
\end{equation}
of the original system; while the bunch
\begin{equation}
\mathrm{R}^{q_{j}} = \left\{ R_{q_{ij}}^{q_{j}} : q_{ij} \in Q^{\ddagger}, q_{ij} \prec^{\ddagger} q_{j} \right\} 
\end{equation}
is constructed by defining new jointly distributed random variables
\(\left\{ R_{q_{ij}}^{q_{j}} \right\}_{q_{ij} \prec^{\ddagger} q_{j}}\)
such that \(\mathrm{R}^{q_{j}}\) is the multimaximal coupling of \(\mathcal{R}^{q_{j}}\)
of system \(\mathcal{R}\).

In particular if \(\mathcal{R}\) is a cyclic system of rank \(n\), then
its consistified system \(\mathcal{R}^{\ddagger}\) is a consistently
connected cyclic system of rank \(2n\). 
The following matrices show the consistification of system \(\mathcal{R}_{3}\) 
and how its bunches relate to the bunches of the original system and the multimaximal
couplings of its connections.
\begin{equation}
\label{eq:consistification}
\begin{array}{ccc}
\begin{array}{|c|c|c|c|c|c||c|}
\hline 
  R_{q_{11}}^{c_{1}} & R_{q_{12}}^{c_{1}} &                    &                    &                    &                    & c_{1} \\ \hline  
                     & R_{q_{12}}^{q_{2}} & R_{q_{22}}^{q_{2}} &                    &                    &                    & q_{2} \\ \hline  
                     &                    & R_{q_{22}}^{c_{2}} & R_{q_{23}}^{c_{2}} &                    &                    & c_{2} \\ \hline 
                     &                    &                    & R_{q_{23}}^{q_{3}} & R_{q_{33}}^{q_{3}} &                    & q_{3} \\ \hline
                     &                    &                    &                    & R_{q_{33}}^{c_{3}} & R_{q_{31}}^{c_{3}} & c_{3} \\ \hline 
  R_{q_{11}}^{q_{1}} &                    &                    &                    &                    & R_{q_{31}}^{q_{1}} & q_{1} \\ \hline\hline 
  q_{11}             & q_{12}             & q_{22}             & q_{23}             & q_{33}             & q_{31}             & \mathcal{R}_{3}^{\ddagger} \\
\hline 
\end{array} 
&  & 
\begin{array}{|c|c|c|c|c|c||c|}
\hline 
  R_{1}^{1} & R_{2}^{1} &           &           &           &           & c_{1} \\ \hline\hline  
            & T_{2}^{1} & T_{2}^{2} &           &           &           & q_{2} \\ \hline\hline  
            &           & R_{2}^{2} & R_{3}^{2} &           &           & c_{2} \\ \hline\hline 
            &           &           & T_{3}^{2} & T_{3}^{3} &           & q_{3} \\ \hline\hline 
            &           &           &           & R_{3}^{3} & R_{1}^{3} & c_{3} \\ \hline\hline 
  T_{1}^{1} &           &           &           &           & T_{1}^{3} & q_{1} \\ \hline 
\end{array}
\end{array}
\end{equation}

\subsection{Linear programs for contextuality and its magnitude}

When it is possible to find a coupling
of \(\mathcal{R}\) that simultaneously agrees with both the bunches and
with the multimaximal couplings of the connections,
the system is said to be \emph{noncontextual}
\cite{Dzhafarov.2016.Contextuality-by-Default,Dzhafarov.2017.Contextuality-by-Default}.
This coupling, if it exists, can be found as the solution to a system
of linear equations defined using vectors \(\mathbf{l}_{(.)}\), \(\mathbf{b}_{(.)}\),
and \(\mathbf{c}_{(.)}\). 
First, let \(\mathbf{p}_{(.)}\) be the concatenation
of the three vectors \(\mathbf{l}_{(.)}\), \(\mathbf{b}_{(.)}\), and \(\mathbf{c}_{(.)}\). 
Then, consider a vector \(\mathbf{s}\) of length \(2^{2n}\) of all possible values 
that a coupling of the entire system \(\mathcal{R}_{n}\) can take. 
That is, if we denote a coupling of \(\mathcal{R}_{n}\) by \(\mathrm{S}_{n}\),
then \(\mathrm{S}_{n}\) is a random vector whose possible values are the conjunction of events
\[
\left\{ S_{j}^{i} = r_{j}^{i} : i, j = 1, \ldots, n, \ q_{j} \prec c_{i} \right\},
\]
with \(r_{j}^{i} = 0, 1\). 
These values are the components of \(\mathbf{s}\).
Let \(\mathbf{M}_{(.)}\) be an incidence \((0/1)\) matrix with \(2^{2n}\)
columns labeled by the elements of \(\mathbf{s}\) and \(12n\) rows labeled
by the events whose probabilities are the components of \(\mathbf{p}_{(.)}\).
The cells of \(\mathbf{M}_{(.)}\) are filled as follows:
\begin{itemize}
\item If the \(u\)th component of \(\mathbf{p}_{(.)}\) is \(\Pr(R_{j}^{i} = r_{j}^{i})\)
and the \(v\)th component of \(\mathbf{s}\) includes the event 
\(\left\{ S_{j}^{i} = r_{j}^{i} \right\}\),
then the cell \((u, v)\) of \(\mathbf{M}_{(.)}\) is a \(1\);
\item if the \(u\)th component of \(\mathbf{p}_{(.)}\) is 
\(\Pr(R_{i}^{i} = r_{i}^{i}, R_{i \oplus i}^{i} = r_{i \oplus 1}^{i})\)
and the \(v\)th component of \(\mathbf{s}\) includes the event 
\(\left\{ S_{i}^{i} = r_{i}^{i}, S_{i \oplus 1}^{i} = r_{i \oplus 1}^{i} \right\}\),
then the cell \((u,v)\) is a \(1\);
\item if the \(u\)th component of \(\mathbf{p}_{(.)}\) is 
\(\Pr(T_{i}^{i} = r_{i}^{i}, T_{i}^{i \ominus 1} = r_{i}^{i \ominus 1})\)
and the \(v\)th component of \(\mathbf{s}\) includes the event
\(\left\{ S_{i}^{i} = r_{i}^{i}, S_{i}^{i \ominus 1} = r_{i}^{i \ominus 1} \right\}\),
then the cell \((u,v)\) is a \(1\);
\item all other cells have zeroes.
\end{itemize}
A detailed description of the construction of \(\mathbf{M}_{(.)}\)
for general systems of binary random variables can be found in
Ref.~\cite{Dzhafarov.2016.Context-content}.

The system \(\mathcal{R}_{n}\) described by \(\mathbf{p_{(.)}^{*}}\)
is noncontextual \cite{Dzhafarov.2016.Context-content} if and only
if there is a vector \(\mathbf{h}\geq0\) (component-wise) such that
\begin{equation}
\mathbf{M}_{(.)} \mathbf{h} = \mathbf{p_{(.)}^{*}}. \label{eq:largeLFT}
\end{equation}
Any solution \(\mathbf{h}^{*}\) gives the probability distribution
of a coupling \(\mathrm{S}_{n}\) of \(\mathcal{R}_{n}\) that contains as its
marginals both the bunch distributions and the multimaximal couplings
of the connections of \(\mathcal{R}_{n}\).

Clearly, the rows of \(\mathbf{M}_{(.)}\) are not linearly
independent.
The vectorial description can therefore be reduced by taking only a subset of
the components of \(\mathbf{p}_{(.)}\),
such that the corresponding rows of \(\mathbf{M}_{(.)}\) are linearly independent. 
Let us choose the following reductions of the vectors
\(\mathbf{l}_{(.)}\), \(\mathbf{b}_{(.)}\), and \(\mathbf{c}_{(.)}\) 
and denote them 
\(\mathbf{l}\), \(\mathbf{b}\), and \(\mathbf{c}\):
\begin{equation}
\label{eq:reducedVectors}
\begin{array}{cllc}
\mathbf{l} & = \left( p_{j}^{i} \right)_{i, j = 1, \ldots, n, \ q_{j} \prec c_{i}} 
& = \left( \Pr(R_{j}^{i} = 1) \right)_{i, j = 1, \ldots, n, \ q_{j} \prec c_{i}}, \\
\mathbf{b} & = \left( p_{i, i \oplus 1} \right)_{i = 1, \ldots, n} 
& = \left( \Pr(R_{i}^{i} = 1, R_{i \oplus 1}^{i} = 1) \right)_{i = 1 , \ldots, n}, 
&   \text{and} \\
\mathbf{c} & = \left( p^{i, i \ominus 1} \right)_{i = 1, \ldots, n} 
& = \left( \Pr(T_{i}^{i} = 1, T_{i}^{i \ominus 1} = 1) \right)_{i = 1, \ldots, n}.
\end{array}
\end{equation}

Linear programming tasks
to compute the CbD-based measures of contextuality can be defined
by employing these vectorial representations
\cite{Kujala.2019.Measures}.
Let a system \(\mathcal{R}_{n}\) be described by 
\begin{equation}
\mathbf{p}^{*} = \left(
\begin{array}{c}
\mathbf{l^{*}} \\
\mathbf{b^{*}} \\
\mathbf{c^{*}}
\end{array}
\right),
\end{equation}
where \(\mathbf{l^{*}}\) and \(\mathbf{b^{*}}\) are the empirical probabilities of the system, 
and \(\mathbf{c}^{*}\) are the probabilities found from the multimaximal couplings of each of its connections.
Let \(\mathbf{M}\) be the incidence matrix found by taking the rows of \(\mathbf{M}_{(.)}\)
corresponding to the elements of \(\mathbf{p}^{*}\). 
Note that the system is noncontextual if and only if there is a vector \(\mathbf{h}\geq0\)
(component-wise) such that 
\begin{equation}
\mathbf{Mh} = \mathbf{p}^{*}, \label{eq:LFT}
\end{equation}
subject to \(\mathbf{1}^{\intercal} \mathbf{h} = 1\) \cite{Dzhafarov.2016.Context-content}.
Denoting the rows of \(\mathbf{M}\) that correspond to \(\mathbf{l}^{*}\), \(\mathbf{c}^{*}\), \(\mathbf{b}^{*}\)
by, respectively, \(\mathbf{M_{l}}\), \(\mathbf{M_{b}}\), \(\mathbf{M_{c}}\),
we can rewrite (\ref{eq:LFT}) as
\begin{equation}
\label{eq:extensoLFT}
\left(
\begin{array}{c}
\mathbf{M_{l}} \\
\mathbf{M_{b}} \\
\mathbf{M_{c}}
\end{array}
\right)
\mathbf{h} = \left(
\begin{array}{c}
\mathbf{l^{*}} \\
\mathbf{b^{*}} \\
\mathbf{c^{*}}
\end{array}
\right).
\end{equation}
An example matrix \(\mathbf{M}_{(.)}\) for cyclic systems of rank 2
can also be found in Ref.~\cite{Dzhafarov.2016.Context-content},
whereas Ref.~\cite{Cervantes.2023.note} illustrates matrix \(\mathbf{M}\)
for cyclic systems of rank 4.

Let \(\mathbf{M'} = \left(\mathbf{M} | \mathbf{M} \right)\),
\(\mathbf{y'} = (\mathbf{y}_{+}^{\intercal} | -\mathbf{y}_{-}^{\intercal})^{\intercal}\), and
\(\mathbf{y} = \mathbf{y}_{+} - \mathbf{y}_{-}\),
where \(\mathbf{y}_{+}, \mathbf{y}_{-}\) are vectors of
\(2^{2n}\) nonnegative components.
Clearly, 
\[
    \mathbf{M'y'} =
\mathbf{My}.
\]
The contextuality measure \(\text{CNT}_{3}(\mathcal{R}_{n})\) can 
be computed solving the linear programming task \cite{Dzhafarov.2016.Context-content}:
\begin{equation}
\label{lp:CNT3}
\begin{tabular}{|ccc|}
\hline
find           & minimizing                            & subject to \tabularnewline
\hline 
\(\mathbf{y'}\) & \(\mathbf{1}^{\intercal} \mathbf{y}_{-}\) & 
\(\mathbf{M'y'}
 = \mathbf{p}^{*}\) \tabularnewline
& & \(\mathbf{1}^{\intercal} 
\mathbf{y'}
= 1\)                 \tabularnewline 
& & \(\mathbf{y}_{+}, \mathbf{y}_{-} \geq 0\)         \tabularnewline 
\hline 
\end{tabular}.
\end{equation} 
For any solution \(\mathbf{y'^{*}}\), 
we compute \(\mathbf{y}^{*} = \mathbf{y}_{+}^{*} - \mathbf{y}_{-}^{*}\) and
\(\text{CNT}_{3}(\mathcal{R}_{n}) = \left\Vert \mathbf{y}^{*}\right\Vert _{1} - 1\).
For brevity and due to the uniqueness of the Hahn-Jordan decomposition, 
we shall confuse notation and also call \(\mathbf{y}^{*}\)
a solution of task~(\ref{lp:CNT3}).
A solution \(\mathbf{y^{*}}\) generally does not define a probability distribution;
instead, it provides a signed \(\sigma\)-additive measure whose total variation
is smallest among all signed measures with marginals that agree both
with the bunches and multimaximal connections of the system \(\mathcal{R}_{n}\).
A solution of this task gives a true probability measure 
if and only if the system is noncontextual.

If a system \(\mathcal{R}_{n}\) is consistently connected, the contextual
fraction proposed by Abramsky et al. \cite{Abramsky.2017.Contextual}
can be computed solving the following linear programming task:
\begin{equation}
\label{lp:CNTF}
\begin{tabular}{|ccc|}
\hline
find           & maximizing                            & subject to \tabularnewline
\hline 
\(\mathbf{z}\) & \(\mathbf{1}^{\intercal} \mathbf{z}\) &
\(\mathbf{M}_{(.)} \mathbf{z} \leq \mathbf{p}_{(.)}^{*}\) \tabularnewline
& & \(\mathbf{z} \geq 0\)                                        \tabularnewline 
& & \(\mathbf{1}^{\intercal} \mathbf{z} \leq 1\)                 \tabularnewline 
\hline 
\end{tabular}.
\end{equation}
For any solution \(\mathbf{z^{*}}\), \(\text{CNTF}(\mathcal{R}_{n}) = 1 - \mathbf{1^{\intercal}z^{*}}\).
The previous task is equivalent to the
one proposed in \cite{Abramsky.2017.Contextual} which uses a simpler
representation of the system \cite{Dzhafarov.2019.Contextuality-by-Default}.
A solution \(\mathbf{z^{*}}\) generally does not define a probability distribution.
It is a defective \(\sigma\)-additive measure
with total measure \(0 \leq T \leq 1\),
that is a true probability measure 
if and only if the system is noncontextual.
Note that both tasks (\ref{lp:CNT3}) and (\ref{lp:CNTF}),
used to compute \(\text{CNT}_{3}\) and CNTF, respectively,
have in general infinitely many solutions.

Now, if we consider the consistified system \(\mathcal{R}_{n}^{\ddagger}\)
of a cyclic system \(\mathcal{R}_{n}\), then 
if \(\mathcal{R}_{n}\) is consistently connected, 
equality (\ref{eq:cntfequality}) is satisfied by Th. 7 of Ref.~\cite{Dzhafarov.2019.Contextuality-by-Default}:
\begin{align}
\text{CNTF}(\mathcal{R}_{n})    & =\text{CNTF}(\mathcal{R}_{n}^{\ddagger}). \label{eq:cntfequality}
\end{align}
Moreover, Th.~7 of Ref.~\cite{Dzhafarov.2019.Contextuality-by-Default}
also shows that, regardless of consistent connectedness, the linear
programming task to compute \(\text{CNTF}(\mathcal{R}_{n}^{\ddagger})\)
is equivalent to the task in expression (\ref{lp:CNTF}) where \(\mathbf{p}^{*}_{(.)}\) 
describes system \(\mathcal{R}_{n}^{\ddagger}\).
Hence, we will use equality (\ref{eq:cntfequality})
as the definition of the contextual fraction for inconsistently connected
systems and compute it using task~(\ref{lp:CNTF}).

\section{Relating \texorpdfstring{$\text{CNT}_3$}{} and CNTF in cyclic systems}

To relate the two measures of degree of contextuality
\(\text{CNT}_{3}\) and CNTF of a system \(\mathcal{R}_{n}\),
we consider the set of its defective quasi-couplings.
Let
\begin{equation}
    \label{eq:def_quasi_pyramid}
    \mathcal{Q}_{n} =
        \left\{
            \mathbf{x} \in \mathbb{R}^{2^{2n}} :
                \mathbf{M}_{(.)} \mathbf{x} \leq \mathbf{p}^{*}_{(.)}
                \, \text{ and } \,
                \mathbf{1}^{\intercal} \mathbf{x} \leq 1
        \right\},
\end{equation}
the convex pyramid obtained
by the intersection of the convex polyhedral cone---%
that is, a space closed under addition and multiplication by non-negative scalars generated
by the intersection of a finite number of half-spaces which have \(\mathbf{0}\) on their boundary
\cite{Lovasz.1986.Matchinga,Weyl.1952.elementary}---%
defined by the half-spaces \(\mathbf{M}_{(.)} \mathbf{x} \leq \mathbf{p}^{*}_{(.)}\)
and the half-space \(\mathbf{1}^{\intercal} \mathbf{x} \leq 1\).
Figure~\ref{fig:setC} schematically illustrates the set \(\mathcal{Q}_{n}\).
We see that the intersection of hyperplane \(\mathbf{1}^{\intercal} \mathbf{y} = 1\) 
and \(\mathcal{Q}_{n}\) defines the face of the pyramid 
on which all solutions to task~(\ref{lp:CNT3})
used to compute \(\text{CNT}_{3}\) lie.
Similarly, the intersection of hyperplane \(\mathbf{1}^{\intercal} \mathbf{z} = 1 - \text{CNTF}\),
\(\mathcal{Q}_{n}\), and the nonnegative orthant of \(\mathbb{R}^{2^{2n}}\),
defines a slice on whose surface lie all solutions to task~(\ref{lp:CNTF})
used to compute CNTF.

\begin{figure}[ht]
    \centering
    \includegraphics[width=0.8\textwidth]{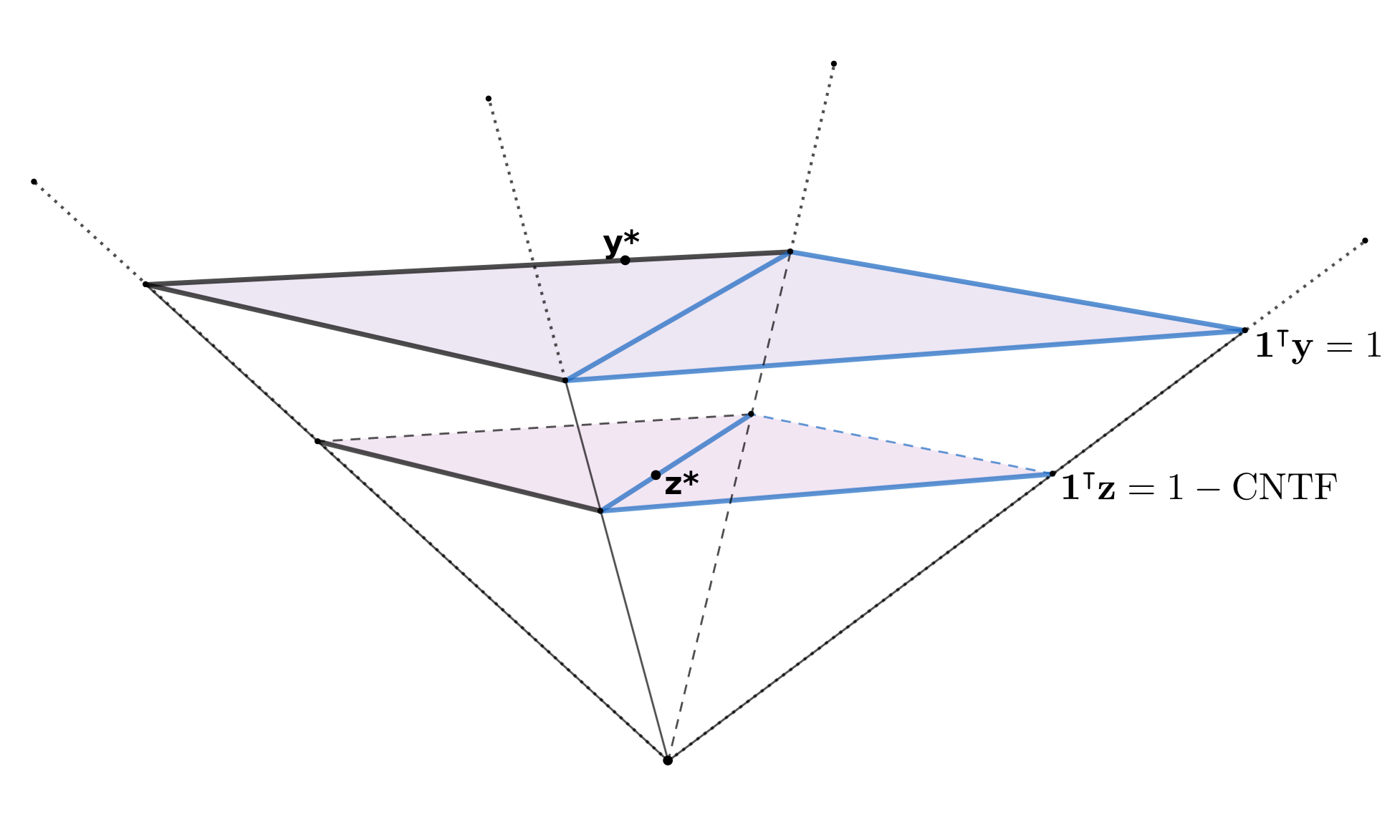}
    \caption{Scheme of the pyramid of defective quasi-couplings \(\mathcal{Q}_{n}\).
    The intersection of \(\mathcal{Q}_{n}\) and the nonnegative orthant of \(\mathbb{R}^{2^{2n}}\) is
    illustrated via the blue lines on the two depicted slices cutting through \(\mathcal{Q}_{n}\).
    Quasi-couplings \(\mathbf{y}^{*}\) lie on the slice \(\mathbf{1}^{\intercal} \mathbf{y} = 1\)
    and defective couplings \(\mathbf{z}^{*}\) that are solutions to task~(\ref{lp:CNTF})
    lie within the closed region delimited by blue edges on the slice
    \(\mathbf{1}^{\intercal} \mathbf{z} = 1 - \text{CNTF}\).
    }
    \label{fig:setC}
\end{figure}

\begin{lemma}
    \label{lem:single_neg}
    If a cyclic system \(\mathcal{R}_{n}\) is contextual, 
    there exists some solution \(\mathbf{y}^{*}\) of task~(\ref{lp:CNT3}) with 
    a single negative component.
\end{lemma}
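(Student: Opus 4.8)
The plan is to translate the statement into the geometry of the noncontextuality polytope and reduce it to a single-facet argument. Write $\Lambda = \{\mathbf{M}\mathbf{h} : \mathbf{h} \geq 0,\ \mathbf{1}^{\intercal}\mathbf{h} = 1\}$ for the set of marginal vectors realizable by genuine couplings, so that $\mathcal{R}_{n}$ is noncontextual exactly when $\mathbf{p}^{*} \in \Lambda$. Setting $N = \text{CNT}_{3}(\mathcal{R}_{n})/2$, note that $\mathbf{1}^{\intercal}\mathbf{y}^{*} = 1$ together with $\lVert \mathbf{y}^{*}\rVert_{1} = 1 + 2N$ forces the positive part of any optimal $\mathbf{y}^{*}$ to have total mass $1 + N$ and its negative part total mass $N$. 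Consequently every optimal solution yields a decomposition $\mathbf{p}^{*} = (1 + N)\mathbf{a} - N\mathbf{b}$ with $\mathbf{a}, \mathbf{b} \in \Lambda$ (the normalized positive and negative parts), and $N$ is precisely the least $t \geq 0$ for which such a representation with $\mathbf{a}, \mathbf{b} \in \Lambda$ exists. Under this dictionary, producing a solution with a single negative component is equivalent to choosing $\mathbf{b}$ to be a single vertex $\mathbf{M}\delta_{s_{0}}$ of $\Lambda$, i.e.\ the image of one deterministic coupling $s_{0}$.

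First I would record the minimality condition geometrically: at the optimum the point $\mathbf{a} = (\mathbf{p}^{*} + N\mathbf{b})/(1 + N)$ must lie on the boundary of $\Lambda$, since otherwise $t$ could be lowered. Because $\mathbf{p}^{*}$ is assembled from genuine marginal, bunch, and multimaximal-connection probabilities, it satisfies all the ``trivial'' facets of $\Lambda$ (the box and consistency constraints); hence the only facets it can violate are the cyclic CbD inequalities \cite{Kujala.2019.Measures, Dzhafarov.2016.Context-content}. Writing a violated facet as $\langle \mathbf{f}_{k}, \cdot \rangle \leq \beta_{k}$ with $\langle \mathbf{f}_{k}, \mathbf{p}^{*}\rangle = \beta_{k} + \epsilon_{k}$, $\epsilon_{k} > 0$, the requirement $\langle \mathbf{f}_{k}, \mathbf{a}\rangle \leq \beta_{k}$ becomes $t\,(\beta_{k} - \langle \mathbf{f}_{k}, \mathbf{b}\rangle) \geq \epsilon_{k}$; minimizing the required $t$ therefore calls for $\mathbf{b}$ minimizing the linear functional $\langle \mathbf{f}_{k}, \cdot\rangle$ over $\Lambda$, an extremum attained at a vertex $\mathbf{M}\delta_{s_{0}}$. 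The key structural input is that a contextual cyclic system violates exactly one such facet, so that this single functional governs the minimal $t$; I would invoke (or reprove) this standard property of cyclic systems.

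With $\mathbf{b} = \mathbf{M}\delta_{s_{0}}$ so chosen, I would verify feasibility of $\mathbf{a} = (\mathbf{p}^{*} + N\mathbf{M}\delta_{s_{0}})/(1+N)$: every non-violated facet $\langle \mathbf{f}_{j}, \cdot\rangle \leq \beta_{j}$ is preserved because $\langle \mathbf{f}_{j}, \mathbf{a}\rangle$ is a convex combination of $\langle \mathbf{f}_{j}, \mathbf{p}^{*}\rangle \leq \beta_{j}$ and $\langle \mathbf{f}_{j}, \mathbf{M}\delta_{s_{0}}\rangle \leq \beta_{j}$, while the single violated facet is made tight by the choice of $N$. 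Realizing $\mathbf{a}$ by some $\mathbf{h} \geq 0$ with $\mathbf{1}^{\intercal}\mathbf{h} = 1$ and setting $\mathbf{y} = (1+N)\mathbf{h} - N\delta_{s_{0}}$ then gives a feasible, optimal solution whose only possible negative entry is at $s_{0}$; since $\mathcal{R}_{n}$ is contextual this entry cannot vanish (otherwise $\mathbf{y} \geq 0$ would certify noncontextuality), so $\mathbf{y}$ has exactly one negative component.

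The main obstacle is the single-facet claim together with the verification that the vertex minimizing the violated functional also respects all remaining constraints simultaneously --- equivalently, ruling out that the optimal negative mass genuinely needs to be spread over several deterministic couplings. This is where the special facet/vertex combinatorics of the cyclic noncontextuality polytope (in particular that at most one odd-cycle CbD inequality can be violated at a time) does the real work; everything else is the linear bookkeeping of the decomposition $\mathbf{p}^{*} = (1 + N)\mathbf{a} - N\mathbf{b}$.
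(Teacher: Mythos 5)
Your overall strategy is genuinely different from the paper's: the paper constructs a single-negative-mass solution explicitly, exploiting the fact that a multimaximal coupling of each connection assigns probability zero to one of its off-diagonal events, and placing the negative mass \(-\frac{1}{2}\text{CNT}_{3}\) on an atom of the coupling space consistent with such a zero-probability event; you instead argue through the geometry of the noncontextuality polytope \(\Lambda\). Your LP bookkeeping is sound: the identification \(N = \text{CNT}_{3}/2\) as the minimal negative mass, the decomposition \(\mathbf{p}^{*} = (1+N)\mathbf{a} - N\mathbf{b}\), the observation that \(\mathbf{a}\) is a convex combination of \(\mathbf{p}^{*}\) and \(\mathbf{b}\) (so every facet that \(\mathbf{p}^{*}\) satisfies is automatically satisfied by \(\mathbf{a}\), since \(\mathbf{b} \in \Lambda\)), the vertex-attainment argument for a single linear functional, and the closing step showing that any realization \(\mathbf{y} = (1+N)\mathbf{h} - N\delta_{s_{0}}\) is optimal and must have \(\mathbf{h}_{s_{0}} = 0\) (hence exactly one negative entry) by minimality of \(N\).

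The gap is at the step you yourself flag as the crux, and it is larger than ``invoke a standard property.'' You need that the set of facets of \(\Lambda\) violated by \(\mathbf{p}^{*}\) consists of exactly one inequality, and you derive this from (i) the only facets \(\mathbf{p}^{*}\) can violate are the cyclic CbD inequalities, and (ii) at most one such inequality can be violated at a time. Part (ii) is indeed easy and standard (two distinct odd sign patterns on the \(2n\) edges of the cycle agree on at most \(2n-2\) of them, and each correlation is bounded by \(1\)). But part (i) requires the complete facet description of \(\Lambda\) --- that Fr\'echet-type inequalities together with the odd-cycle inequalities are \emph{all} of its facets --- and this is strictly stronger than the noncontextuality criterion you cite. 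The criterion characterizes membership in \(\Lambda\) only for system points, whose connection coordinates are multimaximal given their marginals; your point \(\mathbf{a} = (\mathbf{p}^{*} + N\mathbf{b})/(1+N)\) is \emph{not} such a point, so the criterion cannot be invoked to certify \(\mathbf{a} \in \Lambda\), and without facet completeness you cannot exclude that \(\mathbf{p}^{*}\) violates some additional, non-CbD facet, in which case the minimum of your max of ratios over \(\mathbf{b} \in \Lambda\) need not be attained at a vertex. The completeness statement is true --- it is the classical H-representation of the marginal polytope of a cycle of binary variables (cut-polytope theory for \(K_{5}\)-minor-free graphs, effectively re-derived in Kujala and Dzhafarov's proof of the cyclic criterion) --- so your argument can be completed; but as written, the mathematical heart of the proof is delegated to citations that do not state what you actually need.
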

\begin{proof}
    Fix \(i \in \{1, \ldots, n\},\) and choose an event 
    \(S = \left\{ S_{i}^{i} = 1, S_{i}^{i \ominus 1} = 0 \right\}\)
    such that a multimaximal coupling of \(\mathcal{R}_{i}\)
    has, without loss of generality,
    \(
    \Pr \left(
    T_{i}^{i} = 1, T_{i}^{i \ominus 1} = 0 
    \right) = 0.
    \)\footnote{%
    If for no \(\mathcal{R}_{i}\),
    \(
    \Pr \left(
    T_{i}^{i} = 1, T_{i}^{i \ominus 1} = 0 
    \right) = 0
    \), replace \(R_{i}^{c}\) in the system with \(1 - R_{i}^{c}\) for some \(i\).
    }
    Look at the row \(u\) of \(\mathbf{M}_{(.)}\) corresponding to
    \(
    \Pr \left(
    T_{i}^{i} = 1, T_{i}^{i \ominus 1} = 0 
    \right) = 0
    \) and let
    \(V\) be the set of indices \(j \in \{1, \ldots, 2^{2n}\}\) such that \(\mathbf{M}_{(.), u, j} = 1\).
    Choose any \(v \in V\), and let \(s_{v}\) be the \(v\)th component of \(S\).
    Define \(\mathbf{q}_{(.)}^{*}\) component-wise by taking
    \(\mathbf{q}^{*}_{(.),i} = \mathbf{p}^{*}_{(.),i} + \frac{1}{2} \text{CNT}_{3}\) if
    the event \(s'\) whose probability is the \(i\)th component of \(\mathbf{p}_{(.)}^{*}\) is contained in \(s_{v}\),
    and \(\mathbf{q}^{*}_{(.),i} = \mathbf{p}^{*}_{(.),i}\), otherwise.
    Lastly, let
\begin{equation}
    \mathcal{H}_{v} = \left\{ \mathbf{x} \in \mathbb{R}^{2^{2n}} : 
        \mathbf{1}^{\intercal} (\mathbf{x} - \mathbf{e}_{v}) = 1 + \frac{1}{2} \text{CNT}_{3} 
        \text{ and }
        \mathbf{M}_{(.)} (\mathbf{x} - \mathbf{e}_{v}) = 
        \mathbf{q}_{(.)}^{*} 
        \right\},
\end{equation}
    where \(\mathbf{e}_{v}\) is the unit vector with a \(1\) on its \(v\)th component,
    and choose a point \(\mathbf{w}^{*}\) with zero \(v\)th component
    in the intersection of \(\mathcal{H}\)  and the nonnegative orhtant of \(\mathbb{R}^{2^{2n}}\).
    Clearly, the point \(\mathbf{y}^{*} = \mathbf{w}^{*} - \frac{1}{2} \text{CNT}_{3} \mathbf{e}_{v}\)
    is a solution of task~(\ref{lp:CNT3})
    with \(\mathbf{y}^{*}_{v} = -\frac{1}{2} \text{CNT}_{3}\) its sole negative component.
\end{proof}

\begin{lemma}
   \label{lem:equidistant_z}
   Let \(\mathcal{R}_{n}\) be a contextual cyclic system.
   Given a solution \(\mathbf{y}^{*}\) of task (\ref{lp:CNT3})
   as in Lemma \ref{lem:single_neg},
   a solution \(\mathbf{z}^{*}\) of task~(\ref{lp:CNTF})
   can be constructed such that 
   \(|\mathbf{y}^{*}_i| \geq \mathbf{z}^{*}_i\), \(i = 1, \ldots, 2^{2n}\),
   and
   \(||\mathbf{y}^{*} - \mathbf{z}^{*}||_1 = n\text{CNT}_{3}\).
\end{lemma}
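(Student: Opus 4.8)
The plan is to build \(\mathbf{z}^{*}\) directly from the single-negative solution \(\mathbf{y}^{*}\) supplied by Lemma~\ref{lem:single_neg}. First I would record that, because the reduced equalities \(\mathbf{M}\mathbf{y}^{*}=\mathbf{p}^{*}\) together with \(\mathbf{1}^{\intercal}\mathbf{y}^{*}=1\) determine every complementary marginal, bunch, and connection probability by linearity, the \emph{full} equalities \(\mathbf{M}_{(.)}\mathbf{y}^{*}=\mathbf{p}^{*}_{(.)}\) hold as well; hence \(\mathbf{y}^{*}\) lies on the face of \(\mathcal{Q}_{n}\) where all rows are tight, on the slice \(\mathbf{1}^{\intercal}\mathbf{x}=1\). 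Writing the positive part as \(\mathbf{y}^{*}_{+}=\mathbf{y}^{*}+\tfrac12\text{CNT}_{3}\,\mathbf{e}_{v}\), the identity \(\mathbf{M}_{(.)}\mathbf{y}^{*}_{+}=\mathbf{p}^{*}_{(.)}+\tfrac12\text{CNT}_{3}\,\mathbf{m}_{v}\), with \(\mathbf{m}_{v}\) the \(v\)th column of \(\mathbf{M}_{(.)}\), shows that \(\mathbf{y}^{*}_{+}\) violates exactly the rows in the support of \(\mathbf{m}_{v}\), each by \(\tfrac12\text{CNT}_{3}\). The problem therefore reduces to producing a reduction vector \(\mathbf{r}\ge 0\) with \(\mathbf{r}_{v}=0\) and \(\mathbf{r}\le\mathbf{y}^{*}_{+}\) so that \(\mathbf{z}^{*}:=\mathbf{y}^{*}_{+}-\mathbf{r}\) is feasible for task~(\ref{lp:CNTF}); once this is achieved, \(\mathbf{z}^{*}\ge 0\) and \(|\mathbf{y}^{*}_{i}|\ge\mathbf{z}^{*}_{i}\) for every \(i\) are automatic, and the distance collapses to \(\|\mathbf{y}^{*}-\mathbf{z}^{*}\|_{1}=\tfrac12\text{CNT}_{3}+\mathbf{1}^{\intercal}\mathbf{r}\).

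Second, I would construct \(\mathbf{r}\) by a cyclic cascade dictated by the structure of \(\mathcal{R}_{n}\). The decisive violated row is the connection constraint whose target is \(0\), namely \(\Pr(T^{i}_{i}=1,T^{i\ominus1}_{i}=0)=0\): feasibility of~(\ref{lp:CNTF}) forces \emph{all} weight to be stripped from atoms carrying \(S^{i}_{i}=1,\,S^{i\ominus1}_{i}=0\), which accounts for the bulk of the reduction. Because those atoms also feed the adjacent marginal and bunch rows, the repair propagates to the neighbouring context, and so on around the \(n\)-cycle. The multimaximal property that in each connection one of the two off-diagonal probabilities vanishes is exactly what lets every step be carried out by \emph{lowering}, never raising, atom weights, so that \(\mathbf{r}\ge 0\) and \(\mathbf{z}^{*}\ge 0\) are preserved throughout. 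Tracking the cascade around all \(n\) connections should yield \(\mathbf{1}^{\intercal}\mathbf{r}=(n-\tfrac12)\text{CNT}_{3}\), equivalently \(\mathbf{1}^{\intercal}\mathbf{z}^{*}=1-(n-1)\text{CNT}_{3}\), and hence \(\|\mathbf{y}^{*}-\mathbf{z}^{*}\|_{1}=\tfrac12\text{CNT}_{3}+(n-\tfrac12)\text{CNT}_{3}=n\,\text{CNT}_{3}\) by direct summation.

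The delicate point, and the one I expect to be the main obstacle, is proving that this \(\mathbf{z}^{*}\) is an \emph{optimal} solution of task~(\ref{lp:CNTF}) rather than merely a feasible one: that no nonnegative vector respecting \(\mathbf{M}_{(.)}\mathbf{x}\le\mathbf{p}^{*}_{(.)}\) can carry more than mass \(1-(n-1)\text{CNT}_{3}\). I would settle this by exhibiting a dual certificate for the maximisation in~(\ref{lp:CNTF}), assigning dual weights to precisely the rows made tight by the cascade and verifying complementary slackness against \(\mathbf{z}^{*}\), so that strong LP duality forces \(\mathbf{1}^{\intercal}\mathbf{z}^{*}=1-\text{CNTF}\). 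This is where the rank \(n\) must enter quantitatively and where the ruled geometry of \(\mathcal{Q}_{n}\) — the tight face joining the slice \(\mathbf{1}^{\intercal}\mathbf{x}=1\) to the slice \(\mathbf{1}^{\intercal}\mathbf{x}=1-\text{CNTF}\) being spanned by the \(n\) cascade directions — is used. Getting the bookkeeping of which rows are tight exactly right across all \(n\) contexts, while certifying nonnegativity of every atom weight, is the crux; with optimality in hand, the domination and the distance computed above complete the proof.
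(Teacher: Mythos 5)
Your first paragraph sets the problem up exactly as the paper does: pass to the positive part \(\mathbf{w}^{*}=\mathbf{y}^{*}+\tfrac12\text{CNT}_{3}\,\mathbf{e}_{v}\), note that the violated rows of \(\mathbf{M}_{(.)}\) are precisely those whose events contain the atom \(s_{v}\), each violated by \(\tfrac12\text{CNT}_{3}\), and reduce the lemma to producing a reduction \(\mathbf{r}\ge 0\), \(\mathbf{r}_{v}=0\), \(\mathbf{r}\le\mathbf{w}^{*}\), with \(\mathbf{z}^{*}=\mathbf{w}^{*}-\mathbf{r}\) optimal for task~(\ref{lp:CNTF}) and \(\mathbf{1}^{\intercal}\mathbf{r}=(n-\tfrac12)\text{CNT}_{3}\). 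That bookkeeping, including the distance formula, is correct and matches the paper. The gap is that the two steps carrying all the content are only announced, not carried out. The ``cyclic cascade'' that defines \(\mathbf{r}\) is never constructed: you do not verify that the needed removals are available inside the support of \(\mathbf{w}^{*}\) (the constraint \(\mathbf{r}\le\mathbf{w}^{*}\) is what makes the problem nontrivial), and the decisive quantitative claim \(\mathbf{1}^{\intercal}\mathbf{r}=(n-\tfrac12)\text{CNT}_{3}\) is asserted with ``should yield.'' This number is not forced by the violations alone: among the violated rows only the \(n\) bunch and \(n\) connection rows matter, an atom other than \(s_{v}\) can lie in as many as \(2n-2\) of those \(2n\) events, so removals shared across rows could in principle repair everything at far lower total cost, while clumsy removals could cost more. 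What pins the value down in the paper is a rank argument your sketch does not contain: the submatrix \(\mathbf{M}_{w}\) of bunch and connection rows through \(s_{v}\) has full row rank \(2n\); imposing \(\mathbf{x}_{v}=0\) by zeroing column \(v\) drops the rank to \(2n-1\); hence a minimizing reduction meets the remaining constraints with equality and its total mass is \(\mathbf{1}^{\intercal}(\mathbf{M}_{w}\mathbf{w}^{*}-\mathbf{p}_{w}^{*})=-(2n-1)\mathbf{y}^{*}_{v}=(n-\tfrac12)\text{CNT}_{3}\). (Incidentally, your remark that stripping the zero-probability connection event ``accounts for the bulk of the reduction'' is off: \(\mathbf{w}^{*}\) carries exactly \(\tfrac12\text{CNT}_{3}\) of mass on that event, one share out of \(2n-1\).)

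The second missing piece is the one you yourself flag as the main obstacle: optimality of \(\mathbf{z}^{*}\) in task~(\ref{lp:CNTF}). You propose a dual certificate with complementary slackness but exhibit neither the dual weights nor the slackness verification. Without it, your construction gives at best a feasible point of mass \(1-(n-1)\text{CNT}_{3}\), i.e.\ the one-sided bound \(\text{CNTF}\ge(n-1)\text{CNT}_{3}\), which does not prove the lemma (a ``solution'' of task~(\ref{lp:CNTF}) means an optimizer, and the distance claim collapses otherwise). The paper closes this step without duality: it observes that minimizing \(\mathbf{1}^{\intercal}\mathbf{x}\) over the feasibility constraints for \(\mathbf{w}^{*}-\mathbf{x}\) is the same as maximizing \(\mathbf{1}^{\intercal}(\mathbf{w}^{*}-\mathbf{x})\) under those constraints, which it identifies with task~(\ref{lp:CNTF}), so that optimality of the auxiliary minimizer transfers to \(\mathbf{z}^{*}\). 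If you prefer your duality route, be aware that producing the certificate amounts to exhibiting the cyclic-system Bell-type facet inequality with the correct normalization --- work of the same order as the rank argument it would replace, and exactly the part your proposal leaves open.
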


\begin{proof}
 
Choose a solution \(\mathbf{y}^{*}\) in accordance to Lemma~\ref{lem:single_neg}.
Let \(\hat{\mathbf{x}}_{1} = \mathbf{y}^{*}_{v}\mathbf{e}_{v}\)
where \(v\) is the index of the only negative component of \(\mathbf{y}^{*}\).
Using this \(v\), let \(s_{v}\) and \(\mathbf{q}_{(.)}^{*}\) be defined as in Lemma~\ref{lem:single_neg},
and let \(U\) be the set of indices \(u \in \{1, \ldots, 12n\}\) such that
\(\mathbf{M}_{(.),u, v} = 1\).
Note that \(|U| = 4n\), where there are \(n\) indices such that
\(\mathbf{p}_{(.), u}\) corresponds to 
\(\Pr(R_{i}^{i} = r_{i}^{i}, R_{i \oplus i}^{i} = r_{i \oplus 1}^{i})\),
one for each of the \(n\) contexts of \(\mathcal{R}_{n}\);
another \(n\) correspond to
\(\Pr(T_{i}^{i} = r_{i}^{i}, T_{i}^{i \ominus 1} = r_{i}^{i \ominus 1})\),
one per content;
and \(2n\) correspond to one probability
\(\Pr(R_{j}^{i} = r_{j}^{i})\) for each random variable in the system.

Let \(\mathbf{M}_{U}\) be the submatrix of \(\mathbf{M}_{(.)}\)
whose rows are indexed by \(U\),
and \(\mathbf{M}_{U'}\) the matrix with the remaining
rows of \(\mathbf{M}_{(.)}\).
(Note that matrix \(\mathbf{M}_{U}\) is a reduction of
matrix \(\mathbf{M}_{(.)}\) in the same manner as
\(\mathbf{M}\), with the event \(s_{v}\) taking the place of the
event \(\left\{S_{i}^{i} = 1, S_{i \oplus 1}^{i} = 1 \right\}_{i = 1, \ldots, n}\)
for its construction, see Ref.~\cite{Dzhafarov.2016.Context-content}.)
Define \(\mathbf{p}_{U}^{*}\) and \(\mathbf{p}_{U'}^{*}\) analogously.
We can then rewrite \(\mathcal{Q}_{n}\) as the intersection of
\begin{equation}
    \mathcal{Q}_{U} =
        \left\{
            \mathbf{x} \in \mathbb{R}^{2^{2n}} :
                \mathbf{M}_{U} \mathbf{x} \leq \mathbf{p}^{*}_{U}
                \, \text{ and } \,
                \mathbf{1}^{\intercal} \mathbf{x} \leq 1
        \right\},
\end{equation}
and
\begin{equation}
    \mathcal{Q}_{U'} =
        \left\{
            \mathbf{x} \in \mathbb{R}^{2^{2n}} :
                \mathbf{M}_{U'} \mathbf{x} \leq \mathbf{p}^{*}_{U'}
                \, \text{ and } \,
                \mathbf{1}^{\intercal} \mathbf{x} \leq 1
        \right\}.
\end{equation}
From the definition of \(\mathbf{M}\) (see Ref.\cite{Dzhafarov.2016.Context-content}),
we have that the dimension of \(\mathcal{Q}_{n}\) is \(4n +1\).
Similarly, the dimension of \(\mathcal{Q}_{U}\) is \(4n + 1\) because it is
constructed by a minimal subset of defining inequalities of \(\mathcal{Q}_{n}\).

Define \(\mathbf{w}^{*} = \mathbf{y}^{*} - \hat{\mathbf{x}}_{1}\).
Since \(\mathbf{M}_{(.)}\mathbf{w}^{*} = \mathbf{q}_{(.)}^{*}\),
\(\mathbf{w}^{*} \notin \mathcal{Q}_{n}\).
Clearly, \(\mathbf{w}^{*} \notin \mathcal{Q}_{U}\) and \(\mathbf{w}^{*} \in \mathcal{Q}_{U'}\).
Let us next consider the task 
\begin{equation}
\label{lp:y_to_z_full}
\begin{tabular}{|ccc|}
\hline
find           &  minimizing                            & subject to \tabularnewline
\hline 
\(\mathbf{x}\) &  \(\mathbf{1}^{\intercal} \mathbf{x}\) 
& \(\mathbf{M}_{U} (\mathbf{w}^{*} - \mathbf{x}) \leq \mathbf{p}_{U}^{*}\) \tabularnewline 
& & \(\mathbf{M}_{U'} (\mathbf{w}^{*} - \mathbf{x}) \leq \mathbf{p}_{U'}^{*}\) \tabularnewline
& & \(\mathbf{x} \geq 0\)                                           \tabularnewline 
& & \(\mathbf{1}^{\intercal} (\mathbf{w}^{*} - \mathbf{x}) \leq 1\) \tabularnewline 
& & \(\mathbf{e}_{v}^{\intercal} \mathbf{x} = 0\)                   \tabularnewline 
\hline 
\end{tabular}.
\end{equation}
This task must have a solution, since \(\mathbf{x}^{*} = \mathbf{w}^{*}\)
satisfies all its constraints.
Additionally, it is evident that the second set of restrictions
(those associated with \(\mathbf{p}_{U'}^{*}\))
place no restriction to finding the solution because, by construction,
\(\mathbf{M}_{U'} \mathbf{w}^{*} = \mathbf{p}_{U'}^{*}\);
hence, any vector \(\mathbf{x} \geq 0\) will satisfy that set of inequalities.
Further examination of the constraints shows immediately that
for any solution \(\mathbf{x}^{*}\),
\(\mathbf{1}^{\intercal} \mathbf{x}^{*} \geq -2 \mathbf{y}^{*}_{v}\).
Similarly, inspecting the constraints associated with \(\mathbf{p}_{U}^{*}\)
reveals that whenever a vector \(\mathbf{x'}\) satisfies
\(\mathbf{M}_{U, u}^{\intercal} (\mathbf{w}^{*} - \mathbf{x'}) \leq {\mathbf{p}^{*}_{U}}_{u}\),
where \({\mathbf{p}^{*}_{U}}_{u}\) is a probability 
\(\Pr(R_{i}^{i} = r_{i}^{i}, R_{i \oplus i}^{i} = r_{i \oplus 1}^{i})\),
then
\(\mathbf{M}_{U,t}^{\intercal} (\mathbf{w}^{*} - \mathbf{x'}) \leq {\mathbf{p}^{*}_{U}}_{t}\),
where \({\mathbf{p}^{*}_{U}}_{t}\) is a probability 
\(\Pr(R_{i}^{i} = r_{i}^{i})\) or
\(\Pr(R_{i \oplus 1 }^{i} = r_{i \oplus 1}^{i})\)%
---for the same \(i\) in the event corresponding to
\(\Pr(R_{i}^{i} = r_{i}^{i}, R_{i \oplus i}^{i} = r_{i \oplus 1}^{i})\)---,%
will also be satisfied.
An analogous observation can be made when
\({\mathbf{p}^{*}_{U}}_{u}\) is a probability 
\(\Pr(T_{i}^{i} = r_{i}^{i}, T_{i}^{i \ominus 1} = r_{i}^{i \ominus 1})\).
Therefore, at most \(2n\) of the constraints imposed
via matrix \(\mathbf{M}_{U}\) are active in determining the solution space
of task~(\ref{lp:y_to_z_full})

Let \(\mathbf{M}_{w}\) and \(\mathbf{p}_{w}^{*}\) contain the rows and probabilities
of \(\mathbf{M}_{U}\) and \(\mathbf{p}_{U}^{*}\), respectively,
corresponding to bunch and connection probabilities.
Since the rows of \(\mathbf{M}_{U}\) are linearly independent,
so are the rows of \(\mathbf{M}_{w}\), and the latter
has full row rank \(2n\).
Given the considerations in the above paragraph,
task~(\ref{lp:y_to_z_full}) is equivalent to task
\begin{equation}
\label{lp:y_to_z_short}
\begin{tabular}{|ccc|}
\hline
find           &  minimizing                            & subject to \tabularnewline
\hline 
\(\mathbf{x}\) &  \(\mathbf{1}^{\intercal} \mathbf{x}\) 
& \(\mathbf{M}_{w} (\mathbf{w}^{*} - \mathbf{x}) \leq \mathbf{p}_{w}^{*}\) \tabularnewline 
& & \(\mathbf{x} \geq 0\)                                           \tabularnewline 
& & \(\mathbf{e}_{v}^{\intercal} \mathbf{x} = 0\)                   \tabularnewline 
\hline 
\end{tabular}.
\end{equation}
Now, the constraint \(\mathbf{e}_{v}^{\intercal} \mathbf{x} = 0\)
can be replaced by a modification of column \(v\) of matrix \(\mathbf{M}_{w}\)
in which the column is replaced by a vector of zeros.
This effectively reduces its rank to \(2n - 1\).
We further note that, in standard form, the constraints for task~(\ref{lp:y_to_z_short}) are
\(\mathbf{M}_{w} \mathbf{x} \geq \mathbf{M}_{w} \mathbf{w}^{*} - \mathbf{p}_{w}^{*}\),
and the deficiency in rank just introduced implies that there is some
row of \(\mathbf{M}_{w}\) that may be safely removed for purposes of 
finding a solution \(\mathbf{x}^{*}\).
Since (assuming the modified matrix)
\(\mathbf{M}_{w} \mathbf{x} \geq \mathbf{M}_{w} \mathbf{w}^{*} - \mathbf{p}_{w}^{*}\)
is an underdetermined system with \(2n - 1\) inequalities,
there exists a solution \(\mathbf{x}^{*}\) such that
all components of \(\mathbf{x}^{*}\) but \(2n - 1\) are zero.
Therefore, we see that for a solution \(\mathbf{x}^{*}\),
\(\mathbf{M}_{w} \mathbf{x}^{*} = \mathbf{M}_{w} \mathbf{w}^{*} - \mathbf{p}_{w}^{*}\),
and that
\(\mathbf{1}^{\intercal} \mathbf{x}^{*} = \mathbf{1}^{\intercal} (\mathbf{M}_{w} \mathbf{w}^{*} - \mathbf{p}_{w}^{*})
= -(2n - 1) \mathbf{y}^{*}_{v}\). 
The statement is obtained by noting that 
task~(\ref{lp:y_to_z_full}) is equivalent to maximizing 
\(\mathbf{1}^{\intercal} (\mathbf{w}^{*} - \mathbf{x})\) 
under the same constraints, 
which is essentially task~(\ref{lp:CNTF}).
In other words,
\(\mathbf{z}^{*} \equiv \mathbf{y}^{*} - \mathbf{\hat{x}}_{1} - \mathbf{x}^{*}\) 
is an optimal solution to task~(\ref{lp:CNTF}).

\end{proof}

\begin{lemma}
    \label{lem:diff_y_x}
    \(\left\Vert \mathbf{y}^{*} - \mathbf{z}^{*} \right\Vert_{1} = \text{CNTF} + \text{CNT}_{3}\)
\end{lemma}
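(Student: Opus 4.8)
The plan is to evaluate \(\left\Vert \mathbf{y}^{*} - \mathbf{z}^{*}\right\Vert_{1}\) directly, term by term, exploiting the special structure of the pair \((\mathbf{y}^{*}, \mathbf{z}^{*})\) produced in Lemmas~\ref{lem:single_neg} and~\ref{lem:equidistant_z}. Recall that \(\mathbf{y}^{*}\) has its only negative entry at the index \(v\), with \(\mathbf{y}^{*}_{v} = -\tfrac{1}{2}\text{CNT}_{3}\), and that the constructed \(\mathbf{z}^{*}\) is nonnegative (being a solution of task~(\ref{lp:CNTF})) and satisfies \(|\mathbf{y}^{*}_{i}| \geq \mathbf{z}^{*}_{i}\) for every \(i\). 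First I would record that \(\mathbf{z}^{*}_{v} = 0\): since \(\mathbf{z}^{*} = \mathbf{y}^{*} - \hat{\mathbf{x}}_{1} - \mathbf{x}^{*}\) with \(\hat{\mathbf{x}}_{1} = \mathbf{y}^{*}_{v}\mathbf{e}_{v}\) and \(\mathbf{e}_{v}^{\intercal}\mathbf{x}^{*} = 0\), the contributions to the \(v\)th component cancel exactly.

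With this in hand, I would split \(\sum_{i} |\mathbf{y}^{*}_{i} - \mathbf{z}^{*}_{i}|\) into the single term at \(i = v\) and the remaining terms. At \(i = v\) the term equals \(|\mathbf{y}^{*}_{v}| = \tfrac{1}{2}\text{CNT}_{3}\) because \(\mathbf{z}^{*}_{v} = 0\). For each \(i \neq v\) the entry \(\mathbf{y}^{*}_{i}\) is nonnegative, since \(v\) is the unique negative index, so \(|\mathbf{y}^{*}_{i}| = \mathbf{y}^{*}_{i} \geq \mathbf{z}^{*}_{i} \geq 0\) and the absolute value opens with no sign change, giving \(\mathbf{y}^{*}_{i} - \mathbf{z}^{*}_{i}\). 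Summing yields \(\left\Vert \mathbf{y}^{*} - \mathbf{z}^{*}\right\Vert_{1} = |\mathbf{y}^{*}_{v}| + \sum_{i \neq v}\mathbf{y}^{*}_{i} - \sum_{i \neq v}\mathbf{z}^{*}_{i}\). I would then substitute the two global constraints: \(\mathbf{1}^{\intercal}\mathbf{y}^{*} = 1\) gives \(\sum_{i \neq v}\mathbf{y}^{*}_{i} = 1 - \mathbf{y}^{*}_{v} = 1 + |\mathbf{y}^{*}_{v}|\), while \(\mathbf{1}^{\intercal}\mathbf{z}^{*} = 1 - \text{CNTF}\) with \(\mathbf{z}^{*}_{v} = 0\) gives \(\sum_{i \neq v}\mathbf{z}^{*}_{i} = 1 - \text{CNTF}\). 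Collecting terms leaves \(2|\mathbf{y}^{*}_{v}| + \text{CNTF} = \text{CNT}_{3} + \text{CNTF}\), as claimed.

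The computation is short once these structural facts are in place, so there is no serious analytic obstacle; the whole argument is bookkeeping on top of the two preceding lemmas. The one point requiring genuine care is the sign analysis of the componentwise differences: I must verify that \(v\) is really the only index where opening the absolute value could flip a sign, which follows precisely from combining \(\mathbf{y}^{*}_{i} \geq 0\) for \(i \neq v\) with the dominance \(|\mathbf{y}^{*}_{i}| \geq \mathbf{z}^{*}_{i}\) of Lemma~\ref{lem:equidistant_z}, and that \(\mathbf{z}^{*}_{v} = 0\) so the \(v\)th term contributes exactly \(\tfrac{1}{2}\text{CNT}_{3}\). Finally, combining this identity with the relation \(\left\Vert \mathbf{y}^{*} - \mathbf{z}^{*}\right\Vert_{1} = n\,\text{CNT}_{3}\) of Lemma~\ref{lem:equidistant_z} immediately gives \((n-1)\text{CNT}_{3} = \text{CNTF}\), the proportionality the paper sets out to establish.
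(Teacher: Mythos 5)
Your proof is correct and takes essentially the same route as the paper's: the paper writes \(\left\Vert \mathbf{y}^{*}\right\Vert_{1} = \left\Vert \mathbf{y}^{*}-\mathbf{z}^{*}\right\Vert_{1} + \left\Vert \mathbf{z}^{*}\right\Vert_{1}\) (an equality that holds precisely because of the sign structure you verify) and then substitutes \(\left\Vert \mathbf{y}^{*}\right\Vert_{1} = 1+\text{CNT}_{3}\) and \(\left\Vert \mathbf{z}^{*}\right\Vert_{1} = 1-\text{CNTF}\). Your componentwise bookkeeping, including the observation that \(\mathbf{z}^{*}_{v}=0\) and that \(v\) is the only index where a sign could flip, is exactly the content the paper compresses into ``the second line follows by the choice of \(\mathbf{y}^{*}\) and \(\mathbf{z}^{*}\),'' so the two arguments coincide in substance.
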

\begin{proof}
Choose solutions \(\mathbf{y}^{*}\) and \(\mathbf{z}^{*}\)
in accordance to Lemmas~\ref{lem:single_neg} and \ref{lem:equidistant_z}.
Then 
\begin{align*}
\left\Vert \mathbf{y}^{*} \right\Vert_{1} 
    &=  \left\Vert \mathbf{y}^{*} - \mathbf{z}^{*} + \mathbf{z}^{*} \right\Vert_{1} \\
    &= \left\Vert \mathbf{y}^{*} - \mathbf{z}^{*} \right\Vert_{1} + \left\Vert \mathbf{z}^{*} \right\Vert_{1} \\
    &= \left\Vert \mathbf{y}^{*} - \mathbf{z}^{*} \right\Vert_{1} + 1 - \text{CNTF} 
\end{align*}
where the second line follows by the choice of \(\mathbf{y}^{*}\) and \(\mathbf{z}^{*}\).
The statement follows immediately by noting that
\(
\left\Vert \mathbf{y}^{*} \right\Vert_{1} = 1 + \text{CNT}_{3}
\).
\end{proof}

\begin{theorem}
    If \(\mathcal{R}_{n}\) is a cyclic system of rank \(n\), 
    then \(\text{CNTF}(\mathcal{R}_{n}) = (n - 1) \text{CNT}_{3}(\mathcal{R}_{n})\)
\end{theorem}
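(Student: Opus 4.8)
The plan is to reduce the theorem to a single chain of equalities supplied by the three preceding lemmas, after first disposing of the degenerate case. If $\mathcal{R}_{n}$ is noncontextual, then task~(\ref{lp:CNT3}) admits a genuine probability distribution as a solution, so $\text{CNT}_{3}(\mathcal{R}_{n}) = 0$; likewise task~(\ref{lp:CNTF}) attains $\mathbf{1}^{\intercal} \mathbf{z}^{*} = 1$ and hence $\text{CNTF}(\mathcal{R}_{n}) = 0$. Both sides of the claimed identity then vanish and the equality holds trivially. For the remainder I assume $\mathcal{R}_{n}$ is contextual, so that the hypotheses of Lemmas~\ref{lem:single_neg}, \ref{lem:equidistant_z}, and \ref{lem:diff_y_x} are all met.

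First I would fix a solution $\mathbf{y}^{*}$ of task~(\ref{lp:CNT3}) having a single negative component, whose existence is guaranteed by Lemma~\ref{lem:single_neg}. Using this $\mathbf{y}^{*}$, Lemma~\ref{lem:equidistant_z} produces a solution $\mathbf{z}^{*}$ of task~(\ref{lp:CNTF}) satisfying $|\mathbf{y}^{*}_{i}| \geq \mathbf{z}^{*}_{i}$ for every $i$ and, crucially, $\left\Vert \mathbf{y}^{*} - \mathbf{z}^{*} \right\Vert_{1} = n\,\text{CNT}_{3}(\mathcal{R}_{n})$. The key observation is that Lemma~\ref{lem:diff_y_x} evaluates the very same quantity by a different route, giving $\left\Vert \mathbf{y}^{*} - \mathbf{z}^{*} \right\Vert_{1} = \text{CNTF}(\mathcal{R}_{n}) + \text{CNT}_{3}(\mathcal{R}_{n})$.

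Equating the two expressions for $\left\Vert \mathbf{y}^{*} - \mathbf{z}^{*} \right\Vert_{1}$ then yields
\[
n\,\text{CNT}_{3}(\mathcal{R}_{n}) = \text{CNTF}(\mathcal{R}_{n}) + \text{CNT}_{3}(\mathcal{R}_{n}),
\]
and subtracting $\text{CNT}_{3}(\mathcal{R}_{n})$ from both sides gives $\text{CNTF}(\mathcal{R}_{n}) = (n - 1)\,\text{CNT}_{3}(\mathcal{R}_{n})$, as required.

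I do not expect the final assembly to present any difficulty; the substance of the argument is carried entirely by Lemma~\ref{lem:equidistant_z}, and that is where I would expect the main obstacle to lie were the result proved from scratch. In particular, establishing the exact value $n\,\text{CNT}_{3}(\mathcal{R}_{n})$ for $\left\Vert \mathbf{y}^{*} - \mathbf{z}^{*} \right\Vert_{1}$ rests on the polyhedral dimension count for $\mathcal{Q}_{n}$ (namely $4n + 1$), on the fact that the marginal constraints make at most $2n$ of the inequalities encoded in $\mathbf{M}_{U}$ active, and on the rank deficiency that reduces the effective system to $2n - 1$ independent inequalities and thereby forces a sparse optimizer $\mathbf{x}^{*}$. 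Verifying that $\mathbf{z}^{*} = \mathbf{y}^{*} - \hat{\mathbf{x}}_{1} - \mathbf{x}^{*}$ is genuinely \emph{optimal} for task~(\ref{lp:CNTF})---rather than merely feasible---is the delicate point, and everything downstream, including the present theorem, follows by the elementary arithmetic displayed above.
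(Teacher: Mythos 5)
Your proposal is correct and follows essentially the same route as the paper's own proof: dispose of the noncontextual case trivially, then equate the two evaluations of \(\left\Vert \mathbf{y}^{*} - \mathbf{z}^{*} \right\Vert_{1}\) given by Lemma~\ref{lem:equidistant_z} (\(n\,\text{CNT}_{3}\)) and Lemma~\ref{lem:diff_y_x} (\(\text{CNTF} + \text{CNT}_{3}\)). Your closing remarks about where the real difficulty lies (the optimality of \(\mathbf{z}^{*}\) in Lemma~\ref{lem:equidistant_z}) are accurate but do not change the fact that the theorem-level argument matches the paper's.
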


\begin{proof}
The relation in the statement is trivially true for any noncontextual system;
hence, assume that \(\mathcal{R}_{n}\) is a contextual cyclic system of rank \(n\).
Choose solutions \(\mathbf{y}^{*}\) and \(\mathbf{z}^{*}\)
in accordance to Lemmas~\ref{lem:single_neg} and \ref{lem:equidistant_z}.
By Lemma~\ref{lem:equidistant_z},
\begin{align*}
\left\Vert \mathbf{y}^{*} - \mathbf{z^*}\right\Vert_{1} 
  = n \text{CNT}_{3}.
\end{align*}
And from Lemma~\ref{lem:diff_y_x}, it follows that
\begin{equation}
\text{CNTF} = (n - 1) \text{CNT}_{3}.
\end{equation}
\end{proof}

\begin{corollary}
\(\text{CNT}_{3}\) is not invariant to consistification.
If \(\mathcal{R}_{n}\) is a contextual cyclic system and
\(\mathcal{R}_{n}^{\ddagger}\) is its consistification,
then
\[
    \text{CNT}_{3}(\mathcal{R}_{n}^{\ddagger}) =
        \frac{n - 1}{2n - 1}
        \text{CNT}_{3}(\mathcal{R}_{n}).
\]
\end{corollary}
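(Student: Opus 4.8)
The plan is to obtain the claim directly from the main theorem, applied to both \(\mathcal{R}_{n}\) and to its consistification \(\mathcal{R}_{n}^{\ddagger}\), combined with the invariance of the contextual fraction under consistification recorded in equality~(\ref{eq:cntfequality}). The key structural fact, established earlier in the text, is that the consistification of a cyclic system of rank \(n\) is itself a (consistently connected) cyclic system, but of rank \(2n\). Since the theorem holds for every cyclic system, I would instantiate it at rank \(2n\) to obtain \(\text{CNTF}(\mathcal{R}_{n}^{\ddagger}) = (2n - 1)\,\text{CNT}_{3}(\mathcal{R}_{n}^{\ddagger})\), and at rank \(n\) to obtain \(\text{CNTF}(\mathcal{R}_{n}) = (n - 1)\,\text{CNT}_{3}(\mathcal{R}_{n})\).

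Next I would chain these two identities through equality~(\ref{eq:cntfequality}), which asserts \(\text{CNTF}(\mathcal{R}_{n}) = \text{CNTF}(\mathcal{R}_{n}^{\ddagger})\). Substituting both expressions for the contextual fraction yields \((n-1)\,\text{CNT}_{3}(\mathcal{R}_{n}) = (2n-1)\,\text{CNT}_{3}(\mathcal{R}_{n}^{\ddagger})\), and dividing by \(2n - 1\) gives precisely the stated relation. The hypothesis that \(\mathcal{R}_{n}\) is contextual guarantees, via the contextuality-preserving property of consistification, that \(\mathcal{R}_{n}^{\ddagger}\) is contextual as well, so that both \(\text{CNT}_{3}\) values are positive and the quotient is meaningful; in the noncontextual case both sides vanish and the identity holds trivially.

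There is essentially no analytic obstacle in this argument: the entire proof is a short algebraic consequence of the theorem together with the CNTF invariance. The only point requiring care is the bookkeeping of the rank---ensuring that one instantiates the theorem at rank \(2n\) for the consistified system and at rank \(n\) for the original---since it is exactly the mismatch between the factors \((2n-1)\) and \((n-1)\) that produces the nontrivial ratio \(\tfrac{n-1}{2n-1}\) and thereby demonstrates that \(\text{CNT}_{3}\) is \emph{not} invariant under consistification.
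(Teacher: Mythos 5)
Your proposal is correct and takes essentially the same route as the paper: the paper's own one-line proof is exactly the chain \((n-1)\,\text{CNT}_{3}(\mathcal{R}_{n}) = \text{CNTF}(\mathcal{R}_{n}) = \text{CNTF}(\mathcal{R}_{n}^{\ddagger}) = (2n-1)\,\text{CNT}_{3}(\mathcal{R}_{n}^{\ddagger})\), obtained by instantiating the theorem at ranks \(n\) and \(2n\) and invoking equality~(\ref{eq:cntfequality}), which the paper adopts as the definition of CNTF for inconsistently connected systems. Your added remarks on rank bookkeeping and on the trivial noncontextual case are consistent with, though not spelled out in, the paper's proof.
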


\begin{proof}
\(
(n - 1) \text{CNT}_{3}(\mathcal{R}_{n}) = 
    \text{CNTF}(\mathcal{R}_{n}) = 
    \text{CNTF}(\mathcal{R}_{n}^{\ddagger}) = 
    (2n - 1) \text{CNT}_{3}(\mathcal{R}_{n}^{\ddagger}).
\)
\end{proof}

\begin{example}[Consistently connected system]
\label{ex:cons_connected}
Consider a cyclic system \(\mathcal{R}_{3}\) with bunch joint distributions 
\begin{equation}
    \label{ex:bunches}
    \def\arraystretch{1.5}
    \begin{array}{c|c|c}
        & R_{i}^{i} = 0 & R_{i}^{i} = 1 \\
        \hline
        R_{i\oplus 1}^{i} = 0 & \nicefrac{1}{8} & \nicefrac{3}{8} \\
        \hline
        R_{i\oplus 1}^{i} = 1 & \nicefrac{3}{8} & \nicefrac{1}{8}
    \end{array}\ .
\end{equation}
The system \(\mathcal{R}_{3}\) is consistently connected and can be represented by the vector:
\[
    \mathbf{p^*}^\intercal = (
        \nicefrac{1}{2},
        \nicefrac{1}{2},
        \nicefrac{1}{2},
        \nicefrac{1}{2},
        \nicefrac{1}{2},
        \nicefrac{1}{2},
        \nicefrac{1}{8},
        \nicefrac{1}{8},
        \nicefrac{1}{8},
        \nicefrac{1}{2},
        \nicefrac{1}{2},
        \nicefrac{1}{2}
        ).
\]
Let $\{\mathbf{e}_{j}\}_{j = 1, \ldots, 64}$, be the standard basis of $\mathbb{R}^{64}$, 
we can write a solution to task (\ref{lp:CNT3}) with a single negative mass (as in Lemma~\ref{lem:single_neg})
\[
\mathbf{y}^{*} = 
    \frac{1}{16} \left(
    3 \mathbf{e}_{7} -
      \mathbf{e}_{14} +
    2 \mathbf{e}_{25} +
      \mathbf{e}_{26} +
    3 \mathbf{e}_{31} +
    2 \mathbf{e}_{34} +
      \mathbf{e}_{38} +
    2 \mathbf{e}_{40} +
      \mathbf{e}_{42} +
    2 \mathbf{e}_{58}
    \right).
\]

To highlight the dimension of the solution space \(\mathcal{Q}_{U}\),
this solution can be further re-expressed as a linear combination of 
the following \(\text{L}_{1}\)-orthonormal vectors
\(\{\hat{\mathbf{x}}_{j}\}_{j = 1, \ldots, 6}\):
\begin{equation*}
    \begin{aligned}
    \hat{\mathbf{x}}_{1} &= -\mathbf{e}_{14}, \qquad&
    \hat{\mathbf{x}}_{4} &= (\mathbf{e}_{26} + \mathbf{e}_{38} + \mathbf{e}_{42}) / 3, \\
    \hat{\mathbf{x}}_{2} &=  \mathbf{e}_{7}, \qquad&
    \hat{\mathbf{x}}_{5} &= \mathbf{e}_{31}, \\
    \hat{\mathbf{x}}_{3} &= (\mathbf{e}_{25} + \mathbf{e}_{40}) / 2, \qquad&
    \hat{\mathbf{x}}_{6} &= (\mathbf{e}_{34} + \mathbf{e}_{58}) / 2. 
    \end{aligned}
\end{equation*}
In terms of these vectors, we have
\[
\mathbf{y}^{*} = 
    \frac{1}{16} \left(
      \hat{\mathbf{x}}_{1} +
    3 \hat{\mathbf{x}}_{2} +
    4 \hat{\mathbf{x}}_{3} +
    3 \hat{\mathbf{x}}_{4} +
    3 \hat{\mathbf{x}}_{5} +
    4 \hat{\mathbf{x}}_{6} 
    \right).
\]

Now, we can use the construction in Lemma~\ref{lem:equidistant_z} to find the point
\[
\mathbf{z}^{*} = 
    \frac{1}{16} \left(
    0 \hat{\mathbf{x}}_{1} +
    2 \hat{\mathbf{x}}_{2} +
    3 \hat{\mathbf{x}}_{3} +
    2 \hat{\mathbf{x}}_{4} +
    2 \hat{\mathbf{x}}_{5} +
    3 \hat{\mathbf{x}}_{6} 
    \right).
\]
which is a solution to task (\ref{lp:CNTF}) to compute \(\text{CNTF}\).
For this system 
\(\text{CNT}_{3} = \nicefrac{1}{8}\) and 
\[\text{CNTF} = \frac{1}{4} = (3 - 1) \text{CNT}_{3}.\]
\end{example}

\begin{example}[Inconsistently connected system]
\label{ex:incons_connected}
    Consider the system \(\mathcal{R}'_{3}\) in which the distribution of the third bunch 
    of system \(\mathcal{R}_{3}\) from example~\ref{ex:cons_connected}
    is replaced by
    \begin{equation}
    \def\arraystretch{1.5}
    \begin{array}{c|c|c}
        & R_{3}^{3} = 0 & R_{3}^{3} = 1 \\
        \hline
        R_{1}^{3}=0 & \nicefrac{1}{8} & \nicefrac{7}{16} \\
        \hline
        R_{1}^{3}=1 & \nicefrac{3}{8} & \nicefrac{1}{16}
    \end{array}\ .
\end{equation}
The system \(\mathcal{R}'_{3}\) is inconsistently connected and can be represented by the vector:
\[
    \mathbf{p^*}^\intercal = (
        \nicefrac{1}{2},
        \nicefrac{1}{2},
        \nicefrac{1}{2},
        \nicefrac{1}{2},
        \nicefrac{\mathbf{7}}{\mathbf{16}},
        \nicefrac{1}{2},
        \nicefrac{1}{8},
        \nicefrac{1}{8},
        \nicefrac{\mathbf{1}}{\mathbf{16}},
        \nicefrac{1}{2},
        \nicefrac{1}{2},
        \nicefrac{\mathbf{7}}{\mathbf{16}}
        ).
\]
One possible solution \(\mathbf{y}^{*}\) of task (\ref{lp:CNT3}) 
for system \(\mathcal{R}'_{3}\) can be written as a linear combination of 
the following \(\text{L}_{1}\)-orthonormal vectors
\(\{\hat{\mathbf{x}}_{j}\}_{j = 1, \ldots, 6}\):
\begin{equation*}
    \begin{aligned}
    \hat{\mathbf{x}}_{1} &= -\mathbf{e}_{49}, \qquad&
    \hat{\mathbf{x}}_{4} &= (\mathbf{e}_{23} + \mathbf{e}_{39}) / 2, \\
    \hat{\mathbf{x}}_{2} &= (\mathbf{e}_{7} + \mathbf{e}_{31} + \mathbf{e}_{40}) / 3, \qquad&
    \hat{\mathbf{x}}_{5} &=  \mathbf{e}_{25}, \\
    \hat{\mathbf{x}}_{3} &=  \mathbf{e}_{27}, \qquad&
    \hat{\mathbf{x}}_{6} &= (\mathbf{e}_{34} + \mathbf{e}_{58}) / 2. 
    \end{aligned}
\end{equation*}
with
\[
\mathbf{y}^{*} = 
    \frac{1}{16} \left(
      \hat{\mathbf{x}}_{1} +
    6 \hat{\mathbf{x}}_{2} +
      \hat{\mathbf{x}}_{3} +
    2 \hat{\mathbf{x}}_{4} +
    2 \hat{\mathbf{x}}_{5} +
    6 \hat{\mathbf{x}}_{6} 
    \right).
\]

Similarly to the previous example, use the construction in Lemma~\ref{lem:equidistant_z} to find the point
\[
\mathbf{z}^{*} = 
    \frac{1}{16} \left(
    0 \hat{\mathbf{x}}_{1} +
    6 \hat{\mathbf{x}}_{2} +
      \hat{\mathbf{x}}_{3} +
    0 \hat{\mathbf{x}}_{4} +
      \hat{\mathbf{x}}_{5} +
    4 \hat{\mathbf{x}}_{6} 
    \right).
\]
which is a solution to task (\ref{lp:CNTF}) to compute \(\text{CNTF}\).
For this system 
\(\text{CNT}_{3} = \nicefrac{1}{8}\) and 
\[\text{CNTF} = \frac{1}{4} = (3 - 1) \text{CNT}_{3}.\]
\end{example}

\section{Discussion}

The result presented in this paper proves that the conjecture in \cite{Kujala.2019.Measures}
is indeed true.
Moreover, we can now affirm that all the fundamentally different approaches 
to quantify contextuality currently found in the literature are
proportional to each other within the class of cyclic systems.
The proportionality relations among the measures are:
\begin{equation}
      2 \text{CNT}_{0} =
      2 \text{CNT}_{1} =
      2 \text{CNT}_{2} =
        \text{CNTF}    =
(n - 1) \text{CNT}_{3}.
\end{equation}
The equality of the first three measures was shown in
Ref.~\cite{Dzhafarov.2020.Contextuality},
the third equality was proved in Ref.~\cite{Cervantes.2023.note},
and the last equality, in this paper. 
It should also be noticed that the hierarchical measure of contextuality 
proposed in Ref.~\cite{Cervantes.2020.Contextuality} reduces to 
$\text{CNT}_{2}$ for cyclic systems;
therefore, it also satisfies the proportionality to the other measures.
This result therefore completes the contextuality theory of cyclic systems
and its measures.

 However, as noted in 
Refs.\cite{Dzhafarov.2020.Contextuality,cervantes2023hypercyclic,Cervantes.2023.note},
 the relations among these measures are not as simple as
 in other types of systems of random variables.
 In Ref.~\cite{Dzhafarov.2020.Contextuality}, one can find examples 
 of non-cyclic systems for which
 \(\text{CNT}_{1}\) and \(\text{CNT}_{2}\) are not functions of each other.
 A class of examples is considered in Ref.~\cite{Cervantes.2023.note,Cervantes.2021.Contextuality}
 to show the same lack of functional relation
 between \(\text{CNT}_{2}\) and CNTF outside of cyclic systems.
 Lastly, in Ref.~\cite{cervantes2023hypercyclic}, some examples
 of hypercyclic systems of order higher than \(2\)---%
 cyclic systems are a special case of this class where order equals \(2\)---%
 that show that in general there is no functional relation
 among any of the measures here considered.

\paragraph{Acknowledgments}

The authors would like to thank Ehtibar N.\ Dzhafarov, Alisson Tezzin and Bárbara Amaral for helpful discussions. 
GC worked under the Financial Support of the 
Coordenação de Aperfeiçoamento de Pessoal de Nível Superior (CAPES) - Programa de Excelência Acadêmica (PROEX) - Brasil and 
the Conselho Nacional de Desenvolvimento Científico e Tecnológico (CNPq) - Brasil.

\end{document}